\def\Z{{\mathbb Z}}
\def\R{{\mathbb R}}
\def\acts{\triangleright}
\def\T{{\mathbb T}}
\DeclareMathSymbol\crossrt{\mathrel}{AMSb}{"6E}
\DeclareMathSymbol\crosslt{\mathrel}{AMSb}{"6F}
\def\oh{\frac{1}{2}}
\def\tr{\hbox{Tr\ }}
\def\acts{\triangleright}
\newtheorem{lemma}{Lemma}[section]
\newtheorem{theorem}[lemma]{Theorem}
\title{On spectral action over Bieberbach manifolds.}
\author{
Piotr Olczykowski, 
Andrzej Sitarz\thanks{Partially supported by MNII grants 189/6.PRUE/2007/7 and N 201 1770 33} \\
{\em Institute of Physics, Jagiellonian University}\\
{\em Reymonta 4, 30-059 Krak\'ow, Poland} \\
}
\date{November 30, 2010}
\begin{document}
\maketitle
\begin{abstract}
We compute the leading terms of the spectral action for orientable three dimensional Bieberbach manifolds first, using two different methods: the Poisson summation formula and the perturbative expansion. Assuming that the cut-off function is not necessarily symmetric we find that that the scale invariant part of the perturbative expansion might differ from the spectral action of the flat three-torus by the eta invariant. 
\end{abstract}
\noindent{MSC 2000: 58B34, 81T75} \\
\noindent{PACS: 02.40.Gh} \\
\noindent{Keywords: {\em spectral action, noncommutative geometry}} 
\section{Introduction}

Bieberbach manifolds are compact manifolds, which are quotients of 
the Euclidean space by a free, properly discontinuous and isometric 
action of a discrete group. The classification of all Bieberbach manifolds 
is a complex problem in itself (see \cite{Sad} and the references therein). 
The non-trivial lowest dimensional examples appear in three dimensions and have 
been first described in \cite{Bieb1,Bieb2}. 

In this paper we study whether the nonperturbative form of the spectral 
action as introduced by Connes \cite{CoCh} and calculated for several examples
in \cite{CoCh2} and \cite{MaPi} can distinguish between the torus itself 
and the chosen Bieberbach three-manifolds.

\section{Three-dimensional Bieberbach Manifolds}

In this section, we briefly recall the description of three-dimensional 
Bieberbach manifolds as quotients of three-dimensional tori by the action 
of a finite discrete group. We use the algebraic language, starting with 
the algebra of the functions on the three-torus $\T^3$, which we view as 
generated by three mutually commuting unitaries $U,V,W$. 

There are, in total, 10 different Bieberbach three-dimensional manifolds, 
six orientable (and that includes the three-torus) and four nonorientable 
ones. They all could be defined as quotients of the three-torus by the 
following free actions of a discrete group on the three-torus. We
restrict our attention here to the orientable manifolds different from
the three-torus. The respective action of discrete groups on the unitary 
generators of the algebra of functions on the three-torus is summarized in 
the following table:

\begin{center}
\begin{tabular}{|c|c|c|l|}
\hline
name & group & generators & action on the generators of $\T^3$, \\ \hline \hline
G2& $\Z_2$ & e & $e \acts U = -U$, $e \acts V = V^*$, $e \acts W = W^*$ \\ \hline
G3& $\Z_3$ & e & $e \acts U =  e^{\frac{2}{3}\pi i} U$, $e \acts V = W*$, $e \acts W = W^* V$ \\ \hline
G4& $\Z_4$ & e & $e \acts U =  i U$, $e \acts V = W$, $e \acts W = V^*$ \\ \hline
G5& $\Z_6$ & e & $e \acts U =  e^{\frac{1}{3}\pi i} U$, $e \acts V = W$ , $e \acts W = W V^*$ \\ \hline
G6& $\Z_2 \times \Z_2$ & $e_1,e_2$ & $e_1 \acts U =  -U$, $e_1 \acts V = V^*$, $e_1 \acts W = W^*$ \\
& & & $e_2 \acts U =  U^*$, $e_2 \acts V = -V$, $e_2 \acts W = -W^*$ \\ \hline
\end{tabular}
\end{center}

\section{The spectra of the Dirac operator}

The spectrum of the Dirac operator on Bieberbach manifolds has been first
calculated by Pf\"affle \cite{Pfa}. We shall use his results, though rewritten
in a more form more suitable for our purposes. As the covering three-torus we 
shall choose the equilateral one (with lengths of three fundamental circles
equal). 

The spectrum can be also easily derived using the formalism of real 
spectral triples (which we shall present elsewhere \cite{OlSi2}). 

Let us fix here the notation. By $D^3_\tau$ we denote the Dirac operator
on the three-torus with equal lengths of three circles, with the eigenvalues:
$$ \lambda_{k,l,m} = \pm \sqrt{(k+\epsilon_1)^2 + |l+\epsilon_2 + \tau(m+\epsilon_3)|^2},
\;\; k,l,m \in \Z, $$
where $\epsilon_1$, $\epsilon_2$, $\epsilon_3$ are $0$ or $\oh$ and depend on the choice
of the spin structure and $|\tau|=1$, $\tau$ not real. The choice of $\tau=i$ corresponds
to the usually assumed Dirac operator.
We shall denote the spectrum of $D^3_\tau$, counted with the multiplicities, 
by ${{\cal S}p}^3_\tau$. Further, we shall need the generalized Dirac operator on the 
circle, taking the standard one, with eigenvalues:
$$ \lambda_{k} = \alpha k+\beta, \;\;\; k \in \Z, \alpha,\beta \in \R, $$
we shall denote its spectrum by ${{\cal S}p}^1_{\alpha,\beta}$. 

Whenever we write a coefficient in front of the spectrum set we mean the same
set but with the multiplicities reduced by that factor (of course, if the
coefficient is $\frac{1}{n}$ this requires that the multiplicities must 
be divisible by $n$.

We shall parametrize the spins structures of the three-torus by 
$\{\epsilon_1,\epsilon_2,\epsilon_3\}$ being $0$ or $\oh$, 
additionally, the representation of the discrete group on the 
Hilbert space of spinors can add additional spin structure, 
denoted by \hbox{$\delta=\pm 1$}.

We have the following spectra of the Dirac operator:

\begin{itemize}
\item $G2$

Here we must have $\epsilon_k=\oh$ and have eight possible spin structures, 
parametrized by choice of $\epsilon_l,\epsilon_m$ and $\delta$. As the 
Dirac operator on three torus we take $D^3_i$, with eigenvalues 
$\pm \sqrt{(k+\oh)^2+(l+\epsilon_2)^2+(m+\epsilon_3)^2}$, $k,l,m \in \Z$.

$$ 
{{\cal S}p}_{G2} = 
\begin{cases}
\frac{1}{2} {{\cal S}p}^3_i & \hbox{if\ } \epsilon_2=\oh \hbox{\ or\ } \epsilon_3=\oh \\
\frac{1}{2} \left( {{\cal S}p}^3_i \setminus 2 {{\cal S}p}^1_{1,\oh} \right)
\cup 2 {{\cal S}p}^1_{2,\oh-\delta} & \hbox{if\ } \epsilon_2=\epsilon_3=0, \delta=\pm 1. 
\end{cases}
$$

Observe that only in the $\epsilon_l=\epsilon_m=0$ case the spectrum is not 
the same as for the torus.

\item $G3$

In this case only the spin structures with $\epsilon_2=\epsilon_3=0$ could be projected
to the quotient space. The parameter $\delta$ is fixed by the choice of the spin 
structure $\epsilon_k$. As the projectable Dirac we take $D^3_{e^{\frac{2\pi i}{3}}}$. 

$$ 
{{\cal S}p}_{G3} = 
\begin{cases}
\frac{1}{3} \left( {{\cal S}p}^3_{e^{\frac{2\pi i}{3}}} \setminus 2 {{\cal S}p}^1_{1,\oh} \right)
\cup 2 {{\cal S}p}^1_{3,\oh} & \hbox{if\ } \epsilon_1=\oh, \delta=1, \\
\frac{1}{3} \left( {{\cal S}p}^3_{e^{\frac{2\pi i}{3}}} \setminus 2 {{\cal S}p}^1_{1,0} \right)
\cup 2 {{\cal S}p}^1_{3,-1} & \hbox{if\ } \epsilon_1=0,\delta=-1. 
\end{cases}
$$

\item $G4$

For the action of $Z_4$ only $\epsilon_1=\oh$ and $\epsilon_2=\epsilon_3$ spin structures 
could be projected onto the quotient, the Dirac operator which commutes with the action of
the discrete group is $D^3_{e^{\frac{\pi i}{4}}}$. There are four possible spin structures
and the corresponding spectra are:

$$ 
{{\cal S}p}_{G4} = 
\begin{cases}
\frac{1}{4} \left( {{\cal S}p}^3_{e^{\frac{\pi i}{4}}} \right) & \hbox{if\ } \epsilon_2=\epsilon_3=\oh, \\
\frac{1}{4} \left( {{\cal S}p}^3_{e^{\frac{\pi i}{4}}} \setminus 2 {{\cal S}p}^1_{1,\oh} \right)
\cup 2 {{\cal S}p}^1_{4,\frac{3}{2}-\delta} & \hbox{if\ } \epsilon_2=\epsilon_3=0,\delta= \pm 1. 
\end{cases}
$$

\item $G5$

Here the situation is similar as in the $G_3$ case and only the spin structures 
with $\epsilon_1=\oh$ and $\epsilon_2=\epsilon_3=0$ could be projected to the 
quotient space. The parameter $\delta$ is free and gives us two possible
spin structures. The projectable Dirac we take $D^3_{e^{\frac{2 \pi i}{3}}}$. 

$$ 
{{\cal S}p}_{G5} = 
\begin{cases}
\frac{1}{6} \left( {{\cal S}p}^3_{e^{\frac{2\pi i}{3}}} \setminus 2 
{{\cal S}p}^1_{1,\oh} \right)
\cup 2 {{\cal S}p}^1_{6,\oh} & \hbox{if\ } \epsilon_k=\oh, \delta = 1. \\
\frac{1}{3} \left( {{\cal S}p}^3_{e^{\frac{2\pi i}{3}}} \setminus 
2 {{\cal S}p}^1_{1,\oh} \right)
\cup 2 {{\cal S}p}^1_{6,\frac{7}{2}} & \hbox{if\ } \epsilon_k=\oh, \delta = -1.
\end{cases}
$$

\item $G6$

In this case the only projectable spin structure are those with 
$\epsilon_1=\epsilon_2=\epsilon_3=\oh$, the projectable Dirac 
operator is $D^3_i$ and the spectrum of $D$ remains the same 
(apart from rescaled multiplicities) for each of four spin 
structures over $G6$.

$$ 
{{\cal S}p}_{G6} = {{\cal S}p}^3_{i}.
$$
\end{itemize}

\section{The spectral action}

Since we have split the spectra of the Dirac operators into the sets, which
corresponds to the known cases, we shall begin by calculating the spectral
action of the corresponding three-dimensional and one-dimensional tori.

We assume here that the spectral action depends on $D$ and not on $D^2$ 
(that is we do not restrict ourselves to the even functions over 
the spectrum), therefore there is a slight change of notation when 
compared to the \cite{MarPie}. 

We begin with the action for the torus with the Dirac $D^3_\tau$:

$$ 
\begin{aligned}
{\cal S}(D^3_\tau,\Lambda) 
&= 2 \sum_{k,l,m} f \left(\pm \frac{\sqrt{(k+\epsilon_k)^2 
                  + |l+\epsilon_l + \tau(m+\epsilon_m)|^2}}{\Lambda} \right) \\
  &=  \widehat{f_e}(0,0,0) + o(\Lambda^{-1}) 
  = 2 \int dx \int dy \int dz f \left(\frac{\sqrt{x^2 + |y + \tau z|^2}}{\Lambda} \right)\\
  &= \frac{8 \pi^2}{\sin \phi} \Lambda^3 \int_0^\infty d\rho\, f_e(\rho) \rho^2 +
     o(\Lambda^{-1}). 
\end{aligned}
$$
where $\tau=e^{i\phi}$, $\widehat{f}$ denotes the Fourier transform of $f$ considered
as a function of three variables:
$$ \widehat{f}(k_x,k_y,k_z) = \int_{\R^3} f(x,y,z) e^{2\pi i (k_x x+ k_y y+ k_z z)} dx dy dz,$$
and we have used the Poisson summation formula.
 
For the Dirac operator $D^1_{\alpha,\beta}$ over the circle we have:
$$ 
\begin{aligned}
{\cal S}(D^1_{\alpha,\beta},\Lambda) 
&= \sum_{k} f \left( \frac{\alpha k+\beta}{\Lambda} \right) \\
  &= \hat{f}(0) + o(\Lambda^{-1}) 
  =  \Lambda \int_{\R} \left( \frac{\alpha k+\beta}{\Lambda} \right) dx 
    + o(\Lambda^{-1}) \\
  &= \frac{1}{\alpha} \Lambda \int_{\R} f(x) dx  + o(\Lambda^{-1}). 
\end{aligned}
$$
In the formula above $\hat{f}$ is the usual Fourier transform of $f$.
Let us observe that the following identity occurs:

$$ {\cal S}(D^1_{1,\gamma},\Lambda) = {\alpha}{\cal S}(D^1_{\alpha,\beta},\Lambda).$$
independently of the values of $\alpha,\beta$ and $\gamma$.

We can now formulate:

\begin{theorem}
The nonperturbative spectral action over the orientable Bieberbach manifolds 
with the Dirac operator projected from the equilateral Dirac operator over the 
three-torus is (up to an scaling and order $o(\Lambda^{-1})$ indistinguishable 
from the spectral action over the three-torus.
\end{theorem}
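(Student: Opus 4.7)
The plan is to proceed case by case through the five orientable Bieberbach manifolds G2--G6, combining the spectra listed in Section~3 with three ingredients already in hand: the Poisson-summation formula for ${\cal S}(D^3_\tau,\Lambda)$, the analogous leading formula for ${\cal S}(D^1_{\alpha,\beta},\Lambda)$, and the scaling identity ${\cal S}(D^1_{1,\gamma},\Lambda)=\alpha\,{\cal S}(D^1_{\alpha,\beta},\Lambda)+o(\Lambda^{-1})$ stated just above the theorem. Throughout I will take the cut-off $f$ to be a Schwartz function, so that every non-zero Fourier mode in each Poisson sum decays faster than any power of $\Lambda$. The trivial cases --- G6 in every spin structure, and G2 with $\epsilon_2=\oh$ or $\epsilon_3=\oh$ --- need no separate argument, since in each the listed spectrum is literally $\tfrac{1}{n}{\cal S}p^3_i$ for $n=1$ or $n=2$, and the theorem holds on the nose with scaling factor $\tfrac{1}{n}$.

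For the remaining cases --- G2 with $\epsilon_2=\epsilon_3=0$ and every admissible spin structure on G3, G4, G5 --- the decompositions of Section~3 all share the uniform shape $\tfrac{1}{n}\bigl({\cal S}p^3_\tau \setminus 2\,{\cal S}p^1_{1,\oh}\bigr)\cup 2\,{\cal S}p^1_{n,\beta}$, where $n$ is the order of the acting discrete group and $\beta$ depends on the spin structure and on $\delta$. Linearity of the spectral action will then split the Bieberbach action as
\[
{\cal S}(D_G,\Lambda) \;=\; \tfrac{1}{n}\,{\cal S}(D^3_\tau,\Lambda) \;-\; \tfrac{2}{n}\,{\cal S}(D^1_{1,\oh},\Lambda) \;+\; 2\,{\cal S}(D^1_{n,\beta},\Lambda).
\]
The first summand is by construction $\tfrac{1}{n}$ of the three-torus action; its sub-leading part is $o(\Lambda^{-N})$ for every $N$ by the Poisson calculation already carried out. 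This piece therefore supplies exactly the scaled three-torus action demanded by the theorem, and the remaining content is to show that the two one-dimensional corrections cancel modulo $o(\Lambda^{-1})$.

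For that cancellation I will use the circle formula, which gives ${\cal S}(D^1_{\alpha,\beta},\Lambda)=\tfrac{1}{\alpha}\Lambda\!\int f+o(\Lambda^{-1})$ independently of the shift $\beta$. Since the scaling $\alpha$ of the added generalised-circle spectrum equals the group order $n$ in each line of Section~3, the two corrections combine to
\[
-\tfrac{2}{n}\,\Lambda\!\!\int_{\R}\!f(x)\,dx \;+\; 2\cdot\tfrac{1}{n}\,\Lambda\!\!\int_{\R}\!f(x)\,dx \;+\; o(\Lambda^{-1}) \;=\; o(\Lambda^{-1}),
\]
as required, and the theorem follows with overall scaling factor $\tfrac{1}{|G|}$. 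The main obstacle is therefore purely one of bookkeeping: one must confirm line by line in Section~3 that the scaling parameter $\alpha$ of the added circle-type spectrum does match the group order $n$, so that the $\Lambda$-order residues of the two one-dimensional contributions cancel rather than combine into a genuine correction. Once that matching is verified, no further analytic estimates are needed beyond the leading-order calculations already established for the three-torus and for the circle.
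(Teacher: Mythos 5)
Your proposal is correct and follows essentially the same route as the paper: decompose each Bieberbach spectrum as $\tfrac{1}{n}$ of the torus spectrum minus a circle spectrum ${\cal S}p^1_{1,\gamma}$ plus a circle spectrum ${\cal S}p^1_{n,\beta}$, then use the $\beta$-independence and $\tfrac{1}{\alpha}$-scaling of the one-dimensional leading term to cancel the two corrections at order $\Lambda$. The only difference is presentational — you flag explicitly the bookkeeping check that the added circle spectrum's scaling $\alpha$ equals the group order $n$ in every line of Section~3, which the paper simply asserts.
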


\begin{proof}
Of course, only the cases when the spectrum differs significantly for the spectrum
of the Dirac over three torus may give rise to some differences. However, observe 
that the difference in the spectra is always of the form:
$$ - \frac{1}{n} 2 \left( {{\cal S}p}^1_{1,\gamma} \right) + 2 {{\cal S}p}^1_{n,\beta} $$
where the sign denotes whether we substract the spectrum (which would result in 
substracting the resulting component of the spectral action) or add it.
The constants $\beta$ and $\gamma$ vary from case to case, $n$ depends on the 
order of the discrete group $G$ so that the manifold if $T^3/G$.

{}From the observation above, however, we know that the resulting spectral action 
components will not depend on $\beta$ and $\gamma$ and we will obtain:
$$ - \frac{2}{n} \left( \int_R f(x) dx \right) + \frac{2}{n} \left( 
\int_R f(x) dx \right) = 0, $$
and hence will not contribute to the leading terms of the spectral action.
\end{proof}

\section{The perturbative spectral action}

We can explicitly calculate the difference between the spectral 
action on the Bieberbach manifolds and the spectral action on 
the three-torus, expanding then the result in $\Lambda$.

Consider first an even function $f$. Taking $f$ be a Laplace transform 
of $h$:
$$ f(s) = \int_0^\infty e^{-sx} h(x) dx, $$
we can write:
$$ 
\tr f(\frac{|D|}{\Lambda}) = \int_0^\infty \tr e^{-x \frac{|D|}{\Lambda}} h(x) dx.
$$

Knowing the spectra of the Dirac operator over Bieberbach manifolds when 
compared to the three torus we can calculate the difference. 

First we need the technical lemma. Let $D^1_{\alpha,\beta}$ be (as denoted before) 
the spectrum of the rescaled Dirac over the circle. We calculate
exactly the exponential $ \tr e^{-p |D^1_{\alpha,\beta}|}$, assuming that 
$|\beta|<\alpha$.

$$ 
\begin{aligned}
\tr e^{-p |D^1_{\alpha,\beta}|} &= 
e^{-p \beta} + \sum_{k=1}^\infty e^{-p(\alpha k +\beta)} + \sum_{k=1}^\infty e^{-p (\alpha k -\beta)} \\
&= e^{-p |\beta|} + \left( \frac{e^{-p \alpha}}{1- e^{-p\alpha}} \right) 2 \cosh (p\beta).
\end{aligned}
$$

Taking into account that $p = \frac{x}{\Lambda}$ we can take the Laurent expansion for
large values of $\Lambda$:

\begin{equation}
e^{-p |\beta|} + \left( \frac{e^{-p \alpha}}{1- e^{-p\alpha}} \right) 2 \cosh (p\beta)
\sim \frac{2}{\alpha} \frac{\Lambda}{x} + o(\frac{x}{\Lambda}). 
\label{exp1}
\end{equation}

We can now state:

\begin{theorem}
The even component of the function $f$ in the spectral action is the same up to order 
$o(\Lambda^{-1})$ on all three-dimensional Bieberbach manifolds (including three-torus).
\end{theorem}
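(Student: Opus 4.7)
The plan is to reduce the Bieberbach-minus-torus difference of the spectral action to an integral of the one-dimensional trace expression from the preceding lemma, and then exploit the cancellation of leading terms in the Laurent expansion (\ref{exp1}).

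First, since $f$ is even, $f(D/\Lambda) = f(|D|/\Lambda)$, so the Laplace transform representation $f(s) = \int_0^\infty e^{-sx} h(x)\, dx$ gives $\tr f(|D|/\Lambda) = \int_0^\infty \tr e^{-p|D|} h(x)\, dx$ with $p = x/\Lambda$. Reading off the spectra from Section 3, the difference between the trace over any Bieberbach manifold $T^3/G$ and (the appropriately normalized) trace over $T^3$ has the form
$$
\Delta(x,\Lambda) \;=\; -\frac{2}{n}\,\tr e^{-p|D^1_{1,\gamma}|} \;+\; 2\,\tr e^{-p|D^1_{n,\beta}|},
$$
where $n$ is the order of $G$ and $\beta,\gamma$ depend on the specific case; the case $G6$ gives $\Delta\equiv 0$ and is trivial. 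The full spectral action difference is then $\int_0^\infty \Delta(x,\Lambda)\, h(x)\, dx$.

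Second, insert the closed form $\tr e^{-p|D^1_{\alpha,\beta}|} = e^{-p|\beta|} + \frac{e^{-p\alpha}}{1-e^{-p\alpha}}\, 2\cosh(p\beta)$ from the lemma and invoke the expansion (\ref{exp1}). The crucial point is that the leading term $\frac{2}{\alpha}\frac{\Lambda}{x}$ is independent of $\beta$; thus
$$
\Delta(x,\Lambda) \;\sim\; -\frac{2}{n}\cdot\frac{2}{1}\cdot\frac{\Lambda}{x} \;+\; 2\cdot\frac{2}{n}\cdot\frac{\Lambda}{x} \;+\; o\!\left(\frac{x}{\Lambda}\right) \;=\; o\!\left(\frac{x}{\Lambda}\right),
$$
so the would-be divergent $\Lambda/x$ parts cancel identically between the subtracted and added pieces, leaving a remainder of order $o(x/\Lambda)$. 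After integration against $h(x)\, dx$ this produces a contribution of order $o(\Lambda^{-1})$, establishing the claim.

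The main obstacle is the uniform control of the remainder: the estimate (\ref{exp1}) is an asymptotic statement in $p = x/\Lambda$, and I must ensure it survives the $x$-integration against $h$. The natural way to handle this is to split $\int_0^\infty = \int_0^{x_0} + \int_{x_0}^\infty$ and bound the first piece using the Taylor remainder (which is $O(p)$ with explicit constant since the cancellation cancels \emph{every} pole in $p$, leaving a function analytic at $p=0$) and the second piece using the exponential decay $e^{-p\alpha}$ together with decay assumptions on $h$ inherited from smoothness of $f$. Under the usual hypothesis that $f$ is Schwartz, this is routine; but writing the bound out carefully is the only non-trivial step, and it is exactly this step that justifies upgrading the pointwise cancellation to a statement about the integrated spectral action.
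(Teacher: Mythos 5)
Your proposal is correct and follows essentially the same route as the paper: the same Laplace-transform representation, the same decomposition of the spectral difference into a subtracted $\frac{2}{n}$ copy of ${{\cal S}p}^1_{1,\gamma}$ and an added double copy of ${{\cal S}p}^1_{n,\beta}$, and the same cancellation $-\frac{2}{n}\cdot\frac{2}{1}+2\cdot\frac{2}{n}=0$ of the leading $\Lambda/x$ terms from the expansion (\ref{exp1}). Your closing remarks on uniformly controlling the remainder under the $x$-integration supply a level of rigor the paper's proof simply omits, but they do not change the argument.
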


\begin{proof}
As we have seen only for some of the spin structures the spectra of the Dirac operator differ
from the spectrum of the three torus (apart from the trivial factor of multiplicities). In general,
we have, for the Bieberbach $G_x$, which is a quotient of the torus $T^3$ the following relation
of the spectra:
$$ {{\cal S}p}(G_x) = \frac{1}{n_x} \left( {{\cal S}p}(T^3) 
                     \setminus 2 {{\cal S}p}_{1,\epsilon} \right) 
                      + 2 {{\cal S}p}_{n_x,\epsilon'}, $$
where $n_x$ are integer constants, which are $n_2=2,n_3=3,n_4=4,n_5=6$ and $\epsilon,\epsilon'$ 
depend on the choice of spin structure.

So the difference between the spectral actions on the three-torus $T^3$ and on the 
Bieberbach $G_x$ could be calculated from this difference of the spectra:

\begin{equation}
{\cal S}(G_x,\Lambda) - \frac{1}{n_x} {\cal S}(T^3,\Lambda) = 
2 \sum_{\lambda \in {{\cal S}p}_{n_x,\epsilon'}} f\left(\frac{\lambda}{\Lambda} \right)
-\frac{2}{n_x} \sum_{\lambda \in {{\cal S}p}_{1,\epsilon}} f\left(\frac{\lambda}{\Lambda} \right).
\end{equation}

As the spectra ${{\cal S}p}_\pm$ are the spectra of rescaled Dirac on the circle, using
the result \ref{} we see that only the $\Lambda$ component in the perturbative expansion
could appear. Calculating it explicitly:

$$
{\cal S}(G_x,\Lambda) - \frac{1}{n_x} {\cal S}(T^3,\Lambda) = 
\Lambda \left( 2 \frac{2}{n_x} - \frac{2}{n_x} 2 \right) \int \frac{1}{x} f(x) dx + o(\Lambda^{-1})
=  o(\Lambda^{-1}).
$$

Therefore, irrespective of the chosen spin structure and Bieberbach manifold, even component
of the function determining the spectral action gives the same result.
\end{proof}

The situation is different for the odd component of $f$. We can always write, for an odd
function $f$:
$$ f\left( \frac{D}{\Lambda} \right) = \frac{D}{|D|} \phi \left( \frac{|D|}{\Lambda} \right),$$
where $\phi$ is an even function. Assuming that $\phi$ is a Laplace transform of $h$ the odd part 
of the spectral action becomes:
$$ 
\tr f\left( \frac{D}{\Lambda} \right) = \tr  \frac{D}{|D|} \phi \left( \frac{|D|}{\Lambda} \right)
= \int_0^\infty \tr \frac{D}{|D|} e^{-x \frac{|D|}{\Lambda}} h(x) dx.
$$

For the spectra of Dirac operators, which we know, we can calculate the function under
the integral:

$$ \tr \frac{D}{|D|} e^{-x \frac{|D|}{\Lambda}} $$

and obtain (again we denote $p=\frac{x}{\Lambda}$:

$$ 
\begin{aligned}
\tr \hbox{sign}(D^1_{\alpha,\beta}) e^{-p |D^1_{\alpha,\beta}|} &= 
\hbox{sign}(\beta) e^{-p |\beta|} + \sum_{k=1}^\infty e^{-p(\alpha k + \beta)} - \sum_{k=1}^\infty e^{-p (\alpha k -\beta)} \\
&= \hbox{sign}(\beta) e^{-p |\beta|} - \left( \frac{e^{-p \alpha}}{1- e^{-p\alpha}} \right) 2 \sinh (p\beta).
\end{aligned}
$$

We can expand the function for small $p$ around $0$:
$$
 \hbox{sign}(\beta) e^{-p |\beta|} - \left( \frac{e^{-p \alpha}}{1- e^{-p\alpha}} \right) 2 \sinh (p\beta)
 \sim \hbox{sign}(\beta) \frac{\alpha - 2|\beta|}{\alpha} + o(p). 
$$

Therefore, only (up to terms of order $o(\Lambda^{-1})$ only scale invariant 
term can appear. We have:

\begin{theorem}
The odd component of the function $f$ gives rise to a difference in the spectral action
on the Bieberbach manifolds in the scale invariant part of the action. The difference 
equals the eta-invariant of the Dirac operator on the Bieberbach manifold. 
\end{theorem}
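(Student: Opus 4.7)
The plan is to mimic the structure used in the proof of the previous theorem but with the Laplace-transform machinery of the paragraph immediately preceding the statement, in which $f(s)=\hbox{sign}(s)\phi(|s|)$ and $\phi$ is the Laplace transform of $h$. First I would exploit the fact that ${{\cal S}p}^3_\tau$ is symmetric under $\lambda\mapsto-\lambda$, which kills the odd component of the torus action entirely. Using the spectral decomposition recalled in the previous theorem, the full odd part of the Bieberbach spectral action reduces to a combination of two one-dimensional traces, namely
$$
{\cal S}_{\hbox{odd}}(G_x,\Lambda) = \int_0^\infty\!\left( 2\,\tr\,\hbox{sign}(D^1_{n_x,\epsilon'})\, e^{-p|D^1_{n_x,\epsilon'}|} - \frac{2}{n_x}\,\tr\,\hbox{sign}(D^1_{1,\epsilon})\, e^{-p|D^1_{1,\epsilon}|}\right) h(x)\,dx,
$$
with $p=x/\Lambda$.

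Next I would substitute the Laurent expansion displayed just above the statement, $\tr\,\hbox{sign}(D^1_{\alpha,\beta})e^{-p|D^1_{\alpha,\beta}|}=\hbox{sign}(\beta)(\alpha-2|\beta|)/\alpha+o(p)$. The leading term is $p$-independent, hence $\Lambda$-independent after integration against $h$, while the $o(p)$ remainder integrates to $o(\Lambda^{-1})$. This yields
$${\cal S}_{\hbox{odd}}(G_x,\Lambda) = \left( 2\,\hbox{sign}(\epsilon')\frac{n_x-2|\epsilon'|}{n_x} - \frac{2}{n_x}\,\hbox{sign}(\epsilon)(1-2|\epsilon|)\right)\phi(0) + o(\Lambda^{-1}),$$
confirming that the entire odd contribution lives in a purely scale-invariant piece.

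The final step is the identification of the constant in parentheses with $\eta(D_{G_x})$. A Hurwitz-zeta evaluation of $\sum_k\hbox{sign}(\alpha k+\beta)|\alpha k+\beta|^{-s}$ at $s=0$ shows $\eta(D^1_{\alpha,\beta})=\hbox{sign}(\beta)(\alpha-2|\beta|)/\alpha$, while the symmetry of ${{\cal S}p}^3_\tau$ gives $\eta(D_{T^3})=0$. Applying the same spectral decomposition of ${{\cal S}p}(G_x)$ term by term to the eta functional,
$$\eta(D_{G_x}) = \frac{1}{n_x}\eta(D_{T^3}) - \frac{2}{n_x}\eta(D^1_{1,\epsilon}) + 2\,\eta(D^1_{n_x,\epsilon'}),$$
so the parenthesized coefficient equals exactly $\eta(D_{G_x})$, giving ${\cal S}_{\hbox{odd}}(G_x,\Lambda)=\eta(D_{G_x})\,\phi(0)+o(\Lambda^{-1})$ with $\phi(0)$ the scale-invariant coefficient of the odd component of $f$.

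The main obstacle I anticipate is the rigorous matching of the $p\to 0^+$ limit of $\tr\,\hbox{sign}(D^1_{\alpha,\beta})e^{-p|D^1_{\alpha,\beta}|}$ with the zeta-regularized eta invariant $\eta(D^1_{\alpha,\beta})$; the identity is standard but needs care near $\beta=0$ and implicitly uses the assumption $|\beta|<\alpha$ already imposed earlier. A secondary, purely combinatorial, check is to verify case by case across $G2$--$G5$ that the admissible spin structures and the multiplicity data tabulated in Section 3 assemble into Pf\"affle's eta invariants on the respective Bieberbach manifolds, so that the formula above is consistent with independent computations of $\eta(D_{G_x})$.
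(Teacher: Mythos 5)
Your proposal is correct and follows essentially the same route as the paper: expand $\tr\,\hbox{sign}(D^1_{\alpha,\beta})e^{-p|D^1_{\alpha,\beta}|}$ for small $p$, identify the constant term $\hbox{sign}(\beta)(\alpha-2|\beta|)/\alpha$ with $\eta(D^1_{\alpha,\beta})$, note that the torus and standard-circle contributions vanish by spectral symmetry, and assemble the eta invariant of $G_x$ from the one-dimensional pieces. Your Hurwitz-zeta justification of the identification is a welcome elaboration of a step the paper merely asserts, but it does not change the argument.
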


\begin{proof}
First of all, observe that for the rescaled Dirac operator on the circle 
$D_{\alpha,\beta}$ the term:
$$ \hbox{sign}(\beta) \frac{\alpha - 2|\beta|}{\alpha}, $$
is the eta invariant $\eta(D^1_{\alpha,\beta})$, which measures the antisymmetry 
between the positive and negative parts of the spectrum of $D^1_{\alpha,\beta}$. 
Therefore, for any of the spin structures of the circle, the term vanishes for
the standard Dirac operator (that is, $D^1_{1,\oh}$ or $D^1_{1,0}$, using the 
notation of the paper). As a consequence, the difference between the 
(rescaled) spectral action on the three-torus $T^3$ and on the Bieberbach 
$G_x$ is (up to order $o(\Lambda^{-1})$:

$$
{\cal S}(G_x,\Lambda) - \frac{1}{n_x} {\cal S}(T^3,\Lambda) = 
2 \eta(D^1_{n_x,\epsilon'_x}) \phi(0),  
$$
where $n_x$ is as before and $\epsilon'_x$ depends on the chosen spin 
structure, and we have used that $\phi$ is a Laplace transform of 
$h$, so that:
$$ \int_0^\infty h(x) dx = \phi(0).$$
As this is, however, the only component of the spectrum of 
the Bieberbach manifolds, we have:
$$ 2 \eta(D^1_{n_x,\epsilon'_x}) = \eta(D^3_{G_x,\epsilon}), $$
and, finally:
$$ {\cal S}(G_x,\Lambda) = \eta(D^3_{G_x,\epsilon}) \Phi(0).$$ 

The value of $\eta$ invariant can be calculated explicitly for the 
manifolds $G2$,$G3$,$G4$,$G5$ and the chosen spin structures for 
which it does not vanish, the following table shows the results
(A and B denote the spin structures, giving rise to asymmetric 
 spectrum, in the order presented in section (3)):

\begin{center}
\begin{tabular}{|c|c|c|}
\hline
name & A & B \\ \hline \hline
G2& $1$ & $-1$  \\[1mm] \hline
G3& $\frac{4}{3}$ & $-\frac{2}{3}$  \\[1mm] \hline
G4& $\frac{3}{2}$ & $-\frac{1}{2}$  \\[1mm] \hline
G5& $\frac{5}{3}$ & $-\frac{1}{3}$  \\[1mm] \hline
\end{tabular}
\end{center}
\end{proof}

In fact, the result is not entirely surprising. From the general 
results of Bismut and Freed \cite{BiFr} one knows the small-$t$ asymptotic 
of the following function of the Dirac operator on the odd dimensional 
manifolds:
$$ \tr \frac{D}{|D|} e^{-t|D|} = \eta(D) + \sum_{l=0}^\infty 
(A_l +B_l \log t) t^{2l+2}. $$

We can calculate then the leading term of the spectral action arising from
an odd function to be:

$$ {\cal S}(D,\Lambda,f_o) = \eta(D) \phi(0) + o(\Lambda^{-1}). $$

We shall finish this section by observing why this effect was not 
picked by the methods used earlier, which involved sum over the 
entire spectrum with the help of the Poisson summation formula.

Observe that the $\eta$ invariant would appear if $\phi(0) \not= 0$. Since
our function $f(x) = \hbox{sign}(x)\phi(|x|)$ that means that $f$ is odd, 
but discontinuous at $x=0$. Therefore, the previous considerations were valid
but since were (implicitly) assuming continuity of $f$ we could not have 
obtained any deviation from the spectral action over the torus.

\section{Conclusions}

We have shown that apart from the possible difference arising from the
eta invariant the perturbative spectral action is exactly the same for all three dimensional Bieberbach manifolds as for the three torus. This is not at all surprising as all terms in the perturbative expansion (for the symmetric cut-off) depend on the Riemann curvature and Bieberbach manifolds are flat. The new result is the appearance of slight modifications when the cut-off function has an asymmetric part. 
Although here we obtain an invariant, it would be interesting to see if such a term might appear in some more complicated models, with some extra degrees of freedom coming from discrete spectral triples, for instance.


\end{document}